\title{ \resizebox{1\hsize}{!}{Timing Analysis for DAG-based and GFP Scheduled Tasks }}
 \author{\IEEEauthorblockN{      Jos\'e Marinho \IEEEauthorrefmark{1}, Stefan M. Petters \IEEEauthorrefmark{1}  }
 \IEEEauthorblockA{\IEEEauthorrefmark{1}CISTER-ISEP Research Centre,
 Polytechnic Institute of Porto, Portugal\\
 }
 \IEEEauthorblockA{
 Email: \{jmsm,smp\}@isep.ipp.pt  
 }
 }
\newcommand{\equals}{\stackrel{\mathrm{def}}{=}}
\newcommand{\ds}{\displaystyle}
\newcommand{\da}{\mathrm{DAG}}
\newcommand{\gftp}{\mathrm{GFP}}
\newcommand{\critp}[1]{\mathcal{P}_{#1}^{\mathrm{crit}}}
\newcommand{\critpp}[1]{\mathcal{P}_{#1}^{\mathrm{crit}(\ell, r)}}
\newcommand{\Ccritpp}[1]{C_{#1}^{\mathrm{crit}(\ell, r)}}
\newtheorem{lemma}{Lemma}
\newtheorem{theorem}{Theorem}
\newenvironment{proofsketch}{\proof[Proof Sketch] \mbox{}}{\endproof}
\newtheorem{Definition}{Definition}
\newcommand{\y} y    %number of cache misses due to preemption
\newcommand{\Y} Y    % Time associated with CRPD
\newcommand{\x} C    %intrinsic cache misses
\newcommand{\Z} Z    %measures intrinsic and extrinsic cache misses
\newcommand{\maxcomp}[1]{C_{#1}^{\mathrm{crit}}}
\newcommand{\maxcritk}[1]{K_{#1}^{\mathrm{crit}}}
\begin{document}

\maketitle

\begin{abstract}
Modern embedded systems have made the transition from single-core to multi-core architectures, providing performance improvement via parallelism rather than higher clock frequencies. $\da$s are considered among the most generic task models in the real-time domain and are well suited to exploit this parallelism. In this paper we provide a schedulability test using response-time analysis exploiting exploring and bounding the self interference of a $\da$ task. Additionally we bound the interference a high priority task has on lower priority ones. 

%A sufficient schedulability test for fully preemptive parallel tasks with constrained deadlines is presented in this paper. We assume a global fixed task priority ($\gftp$) scheduler and a homogeneous multicore platform. Each parallel task is represented by a directed acyclic graph ($\da$), i.e., the most generic task-model with respect to parallel workload. The problem is addressed in an elegant and efficient manner assuming these settings. To the best of our knowledge, this work is the first to address this problem.
\end{abstract}

\section{\label{s:INTRO}Introduction}

The strive for higher computational power has brought about the multicore platforms as a compelling solution first in general purpose and now also in the embedded real-time systems arena. Rather than relying on the increase of the throughput of single processors, the multicore paradigm, while providing its ability to perform a greater number of simultaneous calculations, has given rise to a new challenge. It often forces system designers to utilize the hardware facilities and use parallel algorithms in order to perform tasks of high computational demand in a predefined time window. 
%Such a constraint is required for instance for advanced data fusion procedures in industrial robotics or objects tracking in wireless sensor networks. 
However, this implies a subtle difference in the way schedulability conditions are posed since parts of the workload from the same task are allowed to execute concurrently; each task is then referred to as a {\em parallel} or {\em Directed Acyclic Graph} ($\da$) task. This paper presents a framework to address this issue for fully preemptive {\em global 
fixed task priority} ($\gftp$) schedulers and homogeneous multicores in which all cores have the same computing capabilities and are interchangeable. It is worth to mention that $\gftp$ schedulers are commonly adopted and supported out of the box on several industry grade real-time operating systems such as $\operatorname{VXWorks}$~\cite{VXWorks}.

\paragraph{Related Work}
Valuable works such as~\cite{Hoon,saifula,bjorn,JingLi,vincenzo} addressed the scheduling problem of $\da$ tasks upon homogeneous multicores. Saifullah et al.~\cite{saifula} presented a method to decompose a generic $\da$ task into a set of virtual sequential tasks and after the decomposition, the popular global earliest deadline first (GEDF) density-based schedulability test is applied. Andersson and Niz~\cite{bjorn} presented an analysis for GEDF where an upper bound on the workload that each task may execute in a given time window is computed. Nevertheless, this upper-bound is computed for a special case of DAG tasks, namely the ``fork-join'' tasks. For such a task: ($i$)~the parallel workloads have the same execution requirement; ($ii$)~they are spawned after a common point; and ($iii$)~they join again after a common point. Note: When a task is executing a section of workload in parallel no further path forks can occur. Chwa et al.~\cite{Hoon} provided a method to compute the interference that each task 
would suffer in a system of so-called ``synchronous parallel'' tasks -- Each task is composed of multiple and potentially contiguous regions of parallel workloads with distinct parallelism levels --. In more than one aspect $\da$ tasks cover a broader area as they allow for parallel workloads to yield distinct execution requirements and a different immediate predecessor for each node. Previous works using $\gftp$ schedulers exist, but in a partitioned environment, i.e., tasks are assigned to cores at design time and no migration is allowed at runtime~\cite{fauberteau,Lakshmanan}. For example, Lakshmanan~et~al.~\cite{Lakshmanan} presented a basic form of $\da$ tasks, namely ``Gang tasks'', in which all the parallel workloads have to be scheduled simultaneously on the processing platform. 

\paragraph{This Research}

In this paper, we present a sufficient schedulability test applicable to constrained deadline DAG~tasks (see Section~\ref{s:MODEL} for a formal definition) scheduled by using a GFTP scheduler on a homogeneous multicore platform. 
%To the best of our knowledge this is the first work proposing a solution for this problem.

\section{\label{s:MODEL}System Model}

\begin{figure}[h!]
\begin{center}
\includegraphics[width=0.7\linewidth]{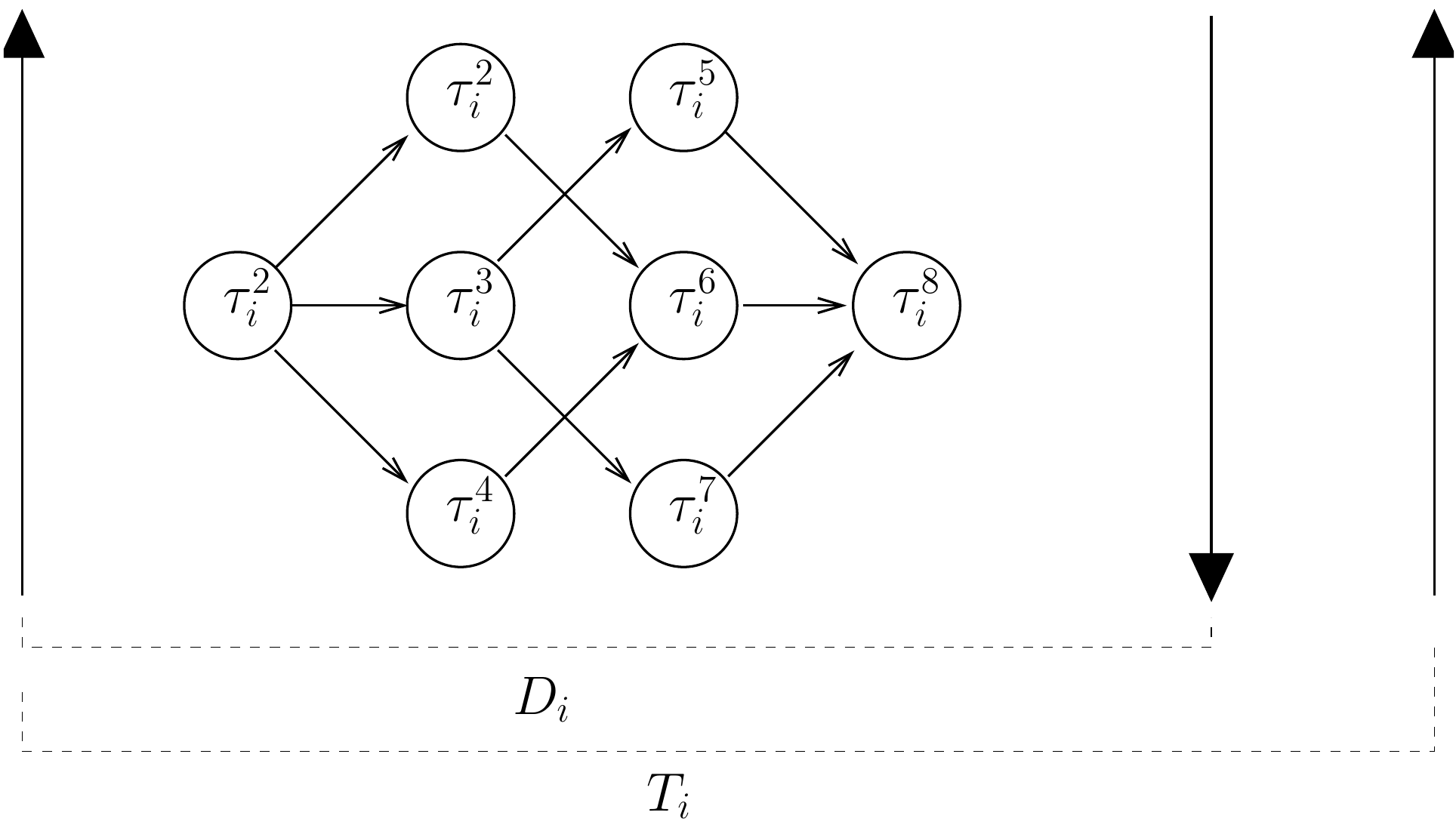}
\caption{Task $\tau_i$}
\label{fig:SM}
\end{center}
\end{figure}
\vspace{-0.8cm}
\paragraph{Task specifications} We consider a task-set $\mathcal{T} \equals \{\tau_1, \ldots , \tau_n\}$ composed of $n$ sporadic tasks. Each sporadic task $\tau_i \equals \langle G_i, D_i, T_i\rangle$, $1 \le i \le n$, is characterized by a $\da$ $G_i$, a \textit{relative deadline} $D_i$ and a \textit{minimum timespan}~$T_i$ (also called period) between two consecutive activations of $\tau_i$. These parameters are given with the following interpretation. Nodes in $G_i$ (also called sub-jobs in the literature) stand for a vector of execution requirements at each activation of $\tau_i$, and the edges represent dependencies between the nodes. A node is denoted by $\tau_i^j$, with $1 \le j \le n_i$, where $n_i$ is the total number of nodes in $G_i$. The execution requirement of node $\tau_i^j$ is denoted by $c_i^j \in [c_{i,\operatorname{min}}^j, c_{i,\operatorname{max}}^j]$. A direct edge from node $\tau_i^j$ to node $\tau_i^k$, denoted as $\tau_i^j \rightarrow \tau_i^k$, implies that the execution of $\tau_
i^k$ 
cannot start unless that of $\tau_i^j$ has completed. In this case, $\tau_i^j$ is called a \textit{parent} of $\tau_i^k$, while $\tau_i^k$ is its child. We denote the \textit{set of all children} of node $\tau_i^j$ by $succ(\tau_i^j) \equals \{ \tau_i^k \mid \tau_i^j \rightarrow \tau_i^k \}$ and the \textit{set of all parents} of node $\tau_i^j$ by $pred(\tau_i^j) \equals \{ \tau_i^k \mid \tau_i^k \rightarrow \tau_i^j \}$. If $(\tau_i^k \not\in succ(\tau_i^j) \cup pred(\tau_i^j)) \wedge (\tau_i^j \not\in succ(\tau_i^k) \cup pred(\tau_i^k))$, then $\tau_i^j$ and $\tau_i^k$ may execute concurrently. In this case, we state that $\tau_i^k \in conc(\tau_i^j)$, and reversely, $\tau_i^j \in conc(\tau_i^k)$. A node without parent is called an \textit{entry node}, while a node without child is called an \textit{exit node}. We assume that a node can start executing only after all its parents have completed. For brevity sake, we consider \textit{only} $\da$ tasks with a single entry and exit nodes. For each task $\tau_
i$, we assume $D_i \le T_i$,
 which is commonly referred to as the constrained deadline task model. Figure~\ref{fig:SM} illustrates a $\da$ task $\tau_i$ with $n_i = 8$ nodes. Note: the analysis presented in this paper is easily tunable for $\da$ tasks with multiple entry and exit nodes. 

The \textit{total execution requirement} of $\tau_i$, denoted by $C_i$, is the \textit{sum} of the execution requirements of all the nodes in $G_i$, i.e., $C_i \equals \sum_{j=1}^{n_i}c_i^j$. The task set $\mathcal{T}$ is said to be $\mathcal{A}$-\textit{schedulable}, if $\mathcal{A}$ can schedule $\mathcal{T}$ such that all the nodes of every task $\tau_i \in \mathcal{T}$ meet its deadline~$D_i$. 

\begin{Definition}[Critical path] A \textit{critical path} for task $\tau_i$, denoted by $\critp{i}$, is a directed path that has the maximum execution requirement among all paths in $G_i$.
\label{def:critpath}
\end{Definition}

\begin{Definition}[Critical path length] The \textit{critical path length} for task $\tau_i$, denoted by $\maxcomp{i}$, is the sum of execution requirements of the nodes belonging to a critical path in $G_i$.
%, i.e., $\maxcomp{i} \equals \sum_{\tau_i^j \in \critp{i}}c_i^j$.
\end{Definition}

\paragraph{Platform and scheduler specifications} We consider a platform $\pi \equals [\pi_1, \pi_2, \ldots, \pi_m]$ consisting of $m$-unit capacity cores, and a \textit{fully preemptive $\gftp$} scheduler. That is: ($i$)~a priority is assigned to each $\da$ task at system design-time and then, at run-time, every node inherits the priority of the $\da$ task it belongs to; ($ii$)~different nodes of the same $\da$ task may execute upon different cores; and finally ($iii$)~a preempted node may resume execution upon the same or a different core, at no cost or penalty. We assume that each node may execute on at most one core at any time instant and that the lower the index of a task the higher its priority.

\section{Timing Analysis and Self-Interference Extraction}

Intrinsically, some nodes of a given $\da$ task $\tau_i$ may prevent some others of the same task from executing. This constitutes a form of self-interference. Since $G_i$ may be viewed as a set of paths, say $\mathcal{P}_i$, each path $\mathcal{P}_i^k \in \mathcal{P}_i$ represents a set of sequential nodes in $G_i$ connected to each other via an edge, i.e., from the view-point of any node of $\mathcal{P}_i^k$, the other nodes of $\mathcal{P}_i^k$ are either children or parents. We denote the complementary set of $\mathcal{P}_i^k$ which contains all the nodes that do not belong to $\mathcal{P}_i^k$ by $\overline{\mathcal{P}_i^k}$. Note: the nodes in $\overline{\mathcal{P}_i^k}$ are not necessarily concurrent to all the nodes in $\mathcal{P}_i^k$. 

Let $\mathcal{P}(\tau_i^\ell,\tau_i^r)$ be the set of all partial paths in $G_i$ which connect nodes $\tau_i^\ell$ and $\tau_i^r$, and let $\mathcal{P}_i^k(\tau_i^\ell,\tau_i^r) \in \mathcal{P}(\tau_i^\ell,\tau_i^r)$ be a specific path. For brevity sake, we denote $\mathcal{P}_i^k(\tau_i^\ell,\tau_i^r)$ by $\mathcal{P}_i^{k(\ell,r)}$ for the remainder of this paper. Since $\tau_i^r \in succ(\tau_i^\ell)$ by definition of $succ(\cdot)$, each path $\mathcal{P}_i^{k(\ell,r)}$ has a worst-case execution requirement $C_i^{k(\ell,r)}$ which is computed by summing up the execution requirements of all its nodes, i.e., $C_i^{k(\ell,r)} \equals \sum_{\tau_i^j \in \mathcal{P}_i^{k(\ell,r)}} c_i^j$. Note: $\mathcal{P}(\tau_i^\ell,\tau_i^r)$ also has a critical path defined as in Definition~\ref{def:critpath}, i.e., the path with the largest execution requirement between $\tau_i^\ell$ and $\tau_i^r$. It is fairly straightforward that if either $\tau_i^\ell$ or $\tau_i^r$ is not part of the end-to-end critical path $\
critp{i}$ then it follows that the 
critical partial path between $\tau_i^\ell$ and $\tau_i^r$ is not contained in $\critp{i}$ either. Now we can quantify the maximum self-interference that a $\da$ task may generate on a given subset of $G_i$.

Let $R_i$ denote the worst-case response time of the $\da$ task $\tau_i$ -- The response time of every activation of $\tau_i$ is the timespan between its workload completion and its release -- Hence $R_i$ is the largest value from all the activations of $\tau_i$. On the roadway for the computation of an upper-bound on $R_i, \; \forall i \in [1,n]$, there are some important checkpoints we must investigate. 

\begin{Definition}[partial worst-case response time] The partial worst-case response time of the set of partial paths $\mathcal{P}(\tau_i^\ell,\tau_i^r)$ is the largest timespan between node $\tau_i^r$ completion time and node $\tau_i^\ell$ release time.
\end{Definition}

\begin{lemma}[Critical  Self-interference Path]
\label{lemma:1}
Considering only self-interference, the partial path of $\mathcal{P}(\tau_i^\ell,\tau_i^r)$ which leads to the worst-case response time of $\tau_i$ is the critical partial path $\critpp{i}$ among all partial paths in $\mathcal{P}(\tau_i^\ell,\tau_i^r)$. 
\label{lemma:self-interference1}
\end{lemma}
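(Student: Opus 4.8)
The plan is to fix the subset $\mathcal{P}(\tau_i^\ell,\tau_i^r)$ and compare, among all partial paths $\mathcal{P}_i^{k(\ell,r)}$ connecting $\tau_i^\ell$ to $\tau_i^r$, the partial response time each induces when only self-interference is accounted for. First I would characterize the self-interference experienced by a given partial path $\mathcal{P}_i^{k(\ell,r)}$: its nodes must execute sequentially (they lie on a directed path, so each is a parent of the next), and while they run, the remaining nodes of $G_i$ that are concurrent with some node of the path can compete for the $m$ cores. The partial response time is therefore the time to push the sum of execution requirements $C_i^{k(\ell,r)}$ of the path's own nodes through the platform while simultaneously draining the portion of the self-interfering workload that can be placed in parallel. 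Since the platform has $m$ unit-speed cores, the quantity that is being serviced in parallel with the path is bounded by a term that does not depend on \emph{which} path we pick (it is the total complementary workload $C_i - C_i^{k(\ell,r)}$ modulo concurrency constraints), so the partial response time is monotone nondecreasing in $C_i^{k(\ell,r)}$.

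The key step is then an exchange/monotonicity argument. Let $\mathcal{P}_i^{k(\ell,r)}$ be an arbitrary partial path and let $\critpp{i}$ be the critical partial path, so by definition $C_i^{k(\ell,r)} \le \Ccritpp{i}$. I would show that replacing $\mathcal{P}_i^{k(\ell,r)}$ by $\critpp{i}$ cannot decrease the partial response time: the sequential portion that must be executed on (effectively) a single ``logical'' timeline grows from $C_i^{k(\ell,r)}$ to $\Ccritpp{i}$, and the self-interfering workload that runs alongside it, being drawn from $G_i\setminus \mathcal{P}_i^{k(\ell,r)}$, is not reduced in a way that compensates — any node removed from the complement because it now belongs to the path is a node that was already forced to execute before or after, sequentially, so it contributes at least as much delay on the path as it would have as interference. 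Formally this is the statement that the worst-case partial response time, viewed as a function of the chosen path, is maximized at the path with the largest $C_i^{k(\ell,r)}$, i.e.\ at $\critpp{i}$.

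Finally I would assemble these observations into the claim: among all partial paths in $\mathcal{P}(\tau_i^\ell,\tau_i^r)$, the one producing the worst-case response time of $\tau_i$ is $\critpp{i}$, because (a) the response-time contribution of the segment between $\tau_i^\ell$ and $\tau_i^r$ is governed by the self-interference analysis above, (b) that contribution is monotone in the path's total execution requirement, and (c) the remaining prefix (release of $\tau_i$ to release of $\tau_i^\ell$) and suffix (completion of $\tau_i^r$ to completion of $\tau_i$) are independent of the choice of path inside $\mathcal{P}(\tau_i^\ell,\tau_i^r)$. Hence the maximum over $k$ is attained at $\critpp{i}$.

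The main obstacle I anticipate is justifying step (b) rigorously rather than intuitively: one must argue that the self-interfering workload that can be overlapped with a path does not \emph{increase} faster than the path's own length when we switch to a longer path, and in particular that nodes migrating from ``interference'' to ``on-path'' status never reduce the total delay. This requires a careful accounting of the concurrency relation $conc(\cdot)$ — a node off the critical partial path may be concurrent with only part of it — and a convexity/greedy argument on how $m$ cores pack the concurrent workload around a sequential chain. I would handle this by bounding the overlapped workload by the worst case (all complementary workload fully parallelizable across the $m-1$ remaining cores) and showing the resulting closed-form bound is nondecreasing in $C_i^{k(\ell,r)}$.
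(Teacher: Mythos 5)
Your proposal is correct and follows essentially the same route as the paper: both reduce the claim to the observation that the self-interference bound has the form $C_i^{k(\ell,r)} + \frac{1}{m}\bigl(C_i - C_i^{k(\ell,r)}\bigr)$, i.e.\ path length plus a $\frac{1}{m}$ fraction of the complementary workload, which is nondecreasing in $C_i^{k(\ell,r)}$ and therefore maximized by $\critpp{i}$. The paper packages this monotonicity as a proof by contradiction built on the Baker--Cirinei $\frac{1}{m}$-interference bound (and, like you, over-approximates by letting the entire complement interfere), so the substance is identical to your step (b).
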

\begin{proof}[Proof (made by contradiction)]
Initially $\Ccritpp{i} \geqslant C_i^{d(\ell, r)}$ for any other partial path $\mathcal{P}_i^{d(\ell,r)}$. Baker and Cirinei~\cite{baker} provided an upper-bound on the interference of a {\em Liu \& Layland (LL) task} (in the LL model, each task $\tau_i$ generates a potentially infinite sequence of jobs and is characterized by a 3-tuple $\tau_i=\left\langle C_i,D_i,T_i\right\rangle$, where $C_i$ is the worst-case execution time of each job, $D_i$ is the relative deadline and $T_i \geq D_i$ is the \textit{minimum} inter-arrival time between two consecutive jobs of~$\tau_i$) on a $m$-multicore platform ($m>1$). In this work, we extend this result to compute the interference that concurrent nodes induce on $\critpp{i}$ in the same manner (see~Eq.~\ref{eq:response}). 
\begin{align}
\resizebox{0.4\hsize}{!}{$\Ccritpp{i} + \frac{1}{m} \ds\sum_{\tau_i^j \in \overline{\critpp{i}}} c_i^j$}
\label{eq:response}
\end{align}
\vspace{-.1cm}
\noindent Let us assume that for some $\mathcal{P}_i^{d(\ell,r)} \neq \critpp{i}$ we have:
\begin{align}
& \resizebox{0.72\hsize}{!}{$\Ccritpp{i} + \frac{1}{m} \ds\sum_{\tau_i^j \in \overline{\critpp{i}}} c_i^j \: < \: C_i^{d(\ell, r)} + \frac{1}{m} \ds\sum_{\tau_i^j \in \overline{\mathcal{P}_i^{d(\ell,r)}}} c_i^j $} 
\label{eq:proof3}
\end{align}
\noindent Then it follows that:
\begin{align}
& \resizebox{0.82\hsize}{!}{$\ds\sum_{\tau_i^j \in \critpp{i}} c_i^j + \frac{1}{m} \ds\sum_{\tau_i^j \in\overline{\critpp{i}}} c_i^j < \ds\sum_{\tau_i^j \in \mathcal{P}_i^{d(\ell,r)}} c_i^j + \frac{1}{m} \ds\sum_{\tau_i^j \in\overline{\mathcal{P}_i^{d(\ell,r)} } } c_i^j$} 
\label{eq:proof4}
\end{align}
\noindent Since 
\begin{align}
& \resizebox{0.88\hsize}{!}{$\ds\sum_{\tau_i^j \in \overline{\critpp{i}}} c_i^j - \ds\sum_{\tau_i^j \in \overline{\mathcal{P}_i^{d(\ell,r)}}} c_i^j = \ds\sum_{\tau_i^j \in \mathcal{P}_i^{d(\ell,r)}} c_i^j - \ds\sum_{\tau_i^j \in \critpp{i}} c_i^j$} 
\label{eq:proof5}
\end{align}
\noindent By substituting Eq.~\eqref{eq:proof5} into Eq.~\eqref{eq:proof4}, Eq.~\eqref{eq:proof3} leads us to:  
\begin{align}
& \resizebox{0.88\hsize}{!}{$\ds\sum_{\tau_i^j \in \critpp{i}} c_i^j - \frac{1}{m} \ds\sum_{\tau_i^j \in \critpp{i}} c_i^j < \ds\sum_{\tau_i^j \in \mathcal{P}_i^{d(\ell,r)}} c_i^j - \frac{1}{m} \ds\sum_{\tau_i^j \in \mathcal{P}_i^{d(\ell,r)}} c_i^j$}
\end{align}   
\noindent which trivially means $\Ccritpp{i} < C_i^{d(\ell, r)}$, contradicting the initial assumption. The Lemma follows. \end{proof}
Informally speaking Lemma~\ref{lemma:self-interference1} infers, for any non-parallel pair of fringe nodes $\tau_i^\ell$ and $\tau_i^r$, that an upper-bound on the response time of $\tau_i$ is obtained by considering $\critpp{i}$ between any $\tau_i^\ell$ and $\tau_i^r$. At the same time, the nodes which do not belong to $\critpp{i}$ are assumed to induce the maximum interference over it. As this is proven for any pair of nodes, the result also holds for the extreme nodes.
\begin{figure}[h!]
  \begin{center}
  \includegraphics[width=0.67\linewidth]{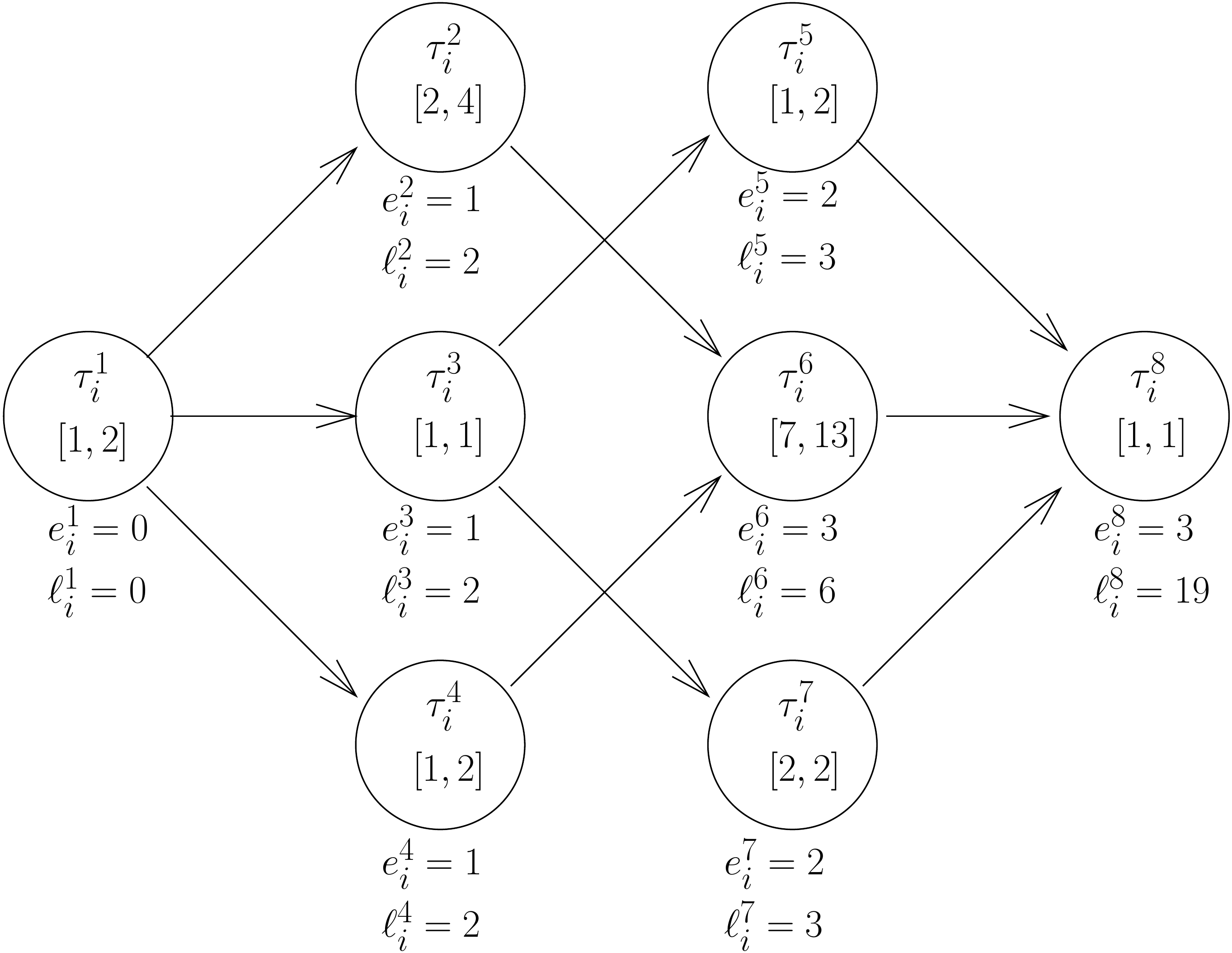}
  \caption{Earliest and latest release times for nodes in a $\da$}
  \label{fig:dagexample_values}
  \end{center}
\end{figure}
Now we focus on deriving the critical path $\critp{i}$ in $G_i$. For every node $\tau_i^j$ in $G_i$, we denote by $e_i^j$ and $\ell_i^j$ its {\em earliest} and {\em latest} release times, respectively. Note: These quantities can be computed through a breadth-first~\cite{DATE} traversal of $G_i$. Assuming $\tau_i^1$ and $\tau_i^{\mathrm{last}}$ are the entry and exit nodes of $\tau_i$, the earliest release time of any node $\tau_i^j$ without any interference can be computed as follows.
\begin{align}
&\resizebox{0.11\hsize}{!}{$ e_i^{1} \equals 0 $} \label{eq:7}\\
&\resizebox{0.48\hsize}{!}{$ e_i^j \equals \ds\max_{\tau_i^x \in \mathit{pred}(\tau_i^j)} \{ e_i^x+c_{i,\operatorname{min}}^x\} $} \label{eq:8}
\end{align}
\noindent where $c_{i,\operatorname{min}}^x$ is the minimum execution requirement of $\tau_i^x$. In the same manner, a breath-first traversal of $G_i$ starting from $\tau_i^{\mathrm{last}}$ provides the latest release time of $\tau_i^j$ as follows.
\begin{align}
&\resizebox{0.15\hsize}{!}{$ \ell_i^{\mathrm{last}'} \equals 0 $} \label{eq:9}\\
&\resizebox{0.46\hsize}{!}{$ \ell_i^{j'} \equals \ds\min_{\tau_i^x \in \mathit{succ}(\tau_i^j)} \{ \ell_i^{x'}\} -c_{i,\operatorname{min}}^j $} \label{eq:10} \\
&\resizebox{0.22\hsize}{!}{$ \ell_i^{j} = \ell_i^{j'} - \ell_i^{1'} $} \label{eq:11}
\end{align}
\noindent Eq.~\eqref{eq:8} and Eq.~\eqref{eq:11} clearly represent a {\em lower}- and an {\em upper}-bound on the best-case and worst-case start times of node $\tau_i^j$, respectively. This can be observed in the following two scenarios: $(i)$~Node $\tau_i^j$ does not suffer any external interference and all its parents request for their minimum execution requirements purveys $e_i^j$; $(ii)$~Node $\tau_i^j$ suffers the maximum possible external interference and its parents request for their maximum execution requirements purveys $\ell_i^j$. With these equations, we can derive the worst-case response time of $\tau_i$ in isolation, denoted by $R_i^{\operatorname{isol}}$. To do so, without explicitly referring to $\critp{i}$, we compute the critical path length $\maxcomp{i}$ of $\tau_i$ as the two problems can be addressed separately. From Eq.~\eqref{eq:9} and \eqref{eq:10} and by starting from the exit node of $\tau_i$, $\maxcomp{i}$ is obtained as follows.
\begin{align}
&\resizebox{0.17\hsize}{!}{$ \mathrm{exe}_i^{\mathrm{last}} \equals 0 $} \label{eq:12}\\
&\resizebox{0.62\hsize}{!}{$ \mathrm{exe}_i^{j} \equals \ds\max_{\tau_i^x \in \mathit{succ}(\tau_i^j)} \{ \ell_i^{x}\} - \mathrm{exe}_{\operatorname{i,\max}}^{j} \hspace{2.5ex} \forall \tau_i^j \in G_i $} \label{eq:13}\\
&\resizebox{0.19\hsize}{!}{$ \maxcomp{i} \equals \mathrm{exe}_i^{1} $}
\end{align}
\noindent For any $\tau_i^j \in \critp{i}$, the execution requirement of the nodes in $conc(\tau_i^j)$ is yielded by $\mathrm{SI}_i \equals C_i-\maxcomp{i}$. From Lemma~\ref{lemma:self-interference1}, an upper-bound on the response time of task $\tau_i$, including only the self-interference is thus given by:
\begin{equation}
\resizebox{0.4\hsize}{!}{$ R_i^{\operatorname{isol}}=\maxcomp{i}+ \frac{1}{m} \cdot \mathrm{SI}_i $}
 \label{eq:isol}
\end{equation}

\section{Upper-bound on the Interference and Schedulability Condition\label{sec:upper_sched}}

In this section we provide an upper-bound on the interference of any $\da$ task $	\tau_i$ and we derive a sufficient schedulability condition. To this end, we distinguish between two scenarios: $(i)$~The scenario where $\tau_i$ does not suffer any interference from higher priority tasks, and $(ii)$~The scenario where $\tau_i$ suffers the maximum possible interference. For brevity sake we assume that all tasks have carry-in at this stage, and will relax this assumption in Section~\ref{sect:reduction}.

Regarding Scenario $(i)$, we recall that $e_i^j$ is a lower-bound on the release time of node $\tau_i^j$. This leads to an upper-bound function $f^{U}_{i,j}(t)$ on the workload request of $\tau_i^j$ at any time $t$ (see~Figure~\ref{fig:Smallcase}) defined as follows.
\begin{align}
&\resizebox{0.7\hsize}{!}{$ f^{U}_{i,j}(t) \equals \min\left(\max( (t  \mod T_i) - e_i^j , 0), c_{i,\max}^j\right) $}
\label{fub}
\end{align}
\noindent Since the workload request of the $\da$ task $\tau_i$ is the sum over the workload requests of all its nodes, then an upper-bound on the workload request of $\tau_i$ at any time $t$ (see~Figure~\ref{fig:FUL}) is defined as follows.
\begin{equation} \hspace{-.03cm}
\resizebox{0.15\hsize}{!}{$ F_i^U(t) \equals $}
\begin{cases}
\resizebox{0.2\hsize}{!}{$ \sum_{\tau_i^j \in G_i} f^{U}_{i,j}(t)$} & \resizebox{0.25\hsize}{!}{$\text{ if } t < T_i - \maxcritk{i} $} \\ \\
\resizebox{0.38\hsize}{!}{$ \left\lfloor \frac{t + \maxcritk{i}}{T_i}   \right\rfloor \cdot C_i + \sum_{\tau_i^j \in G_i}  U_i^j(t) $} & \resizebox{0.15\hsize}{!}{$ \text{otherwise} $}
\end{cases}
\label{FUB}
\end{equation}

\noindent where \resizebox{0.52\hsize}{!}{$ U_i^j(t) = f^{U}_{i,j}\left((t + \maxcritk{i}) \mod T_i  \right)$} and \resizebox{0.28\hsize}{!}{$ \maxcritk{i} \equals R_i - \maxcomp{i}$}.

Regarding scenario~$(ii)$, $\tau_i^1$ is assigned to a core at most $\maxcritk{i}$ time units after the task is released  and $\tau_i^j$ is released at most $\ell_i^j$ time units after node $\tau_i^1$ has started execution. This leads to a lower-bound function $f^{L}_{i,j}(t)$ on the workload request of $\tau_i^j$ at any time $t$ (see~Figure~\ref{fig:Smallcase}) defined as follows.
\begin{align}
&\resizebox{0.7\hsize}{!}{$ f^{L}_{i,j}(t) \equals  \min\left( \max( (t  \mod T_i) - \ell_i^j ,  0), c_{i,\max}^j \right)$}
\label{flb}
\end{align}

\begin{figure}[tb]
  \begin{center}
  \includegraphics[width=0.77\linewidth]{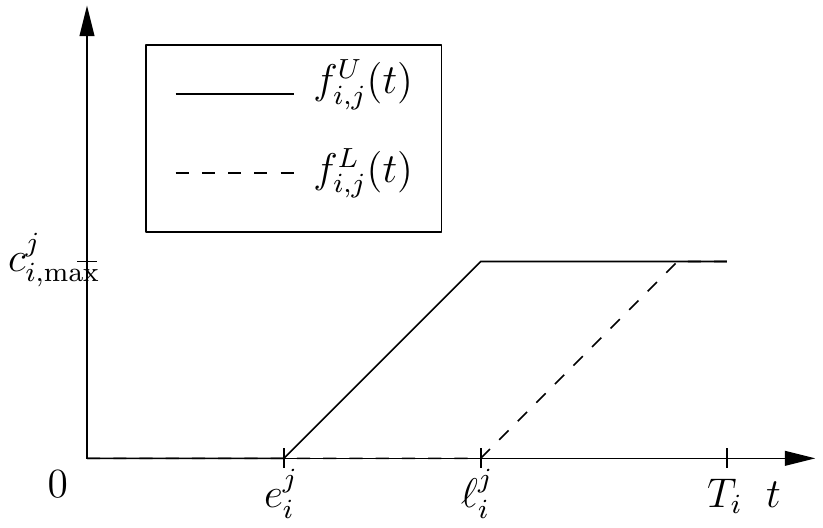}
  \caption{Extreme cases for node $\tau_i^j$ execution requirements}
  \label{fig:Smallcase}
  \end{center}
\end{figure}

\begin{figure}[tb]
  \begin{center}
  \includegraphics[width=0.95\linewidth]{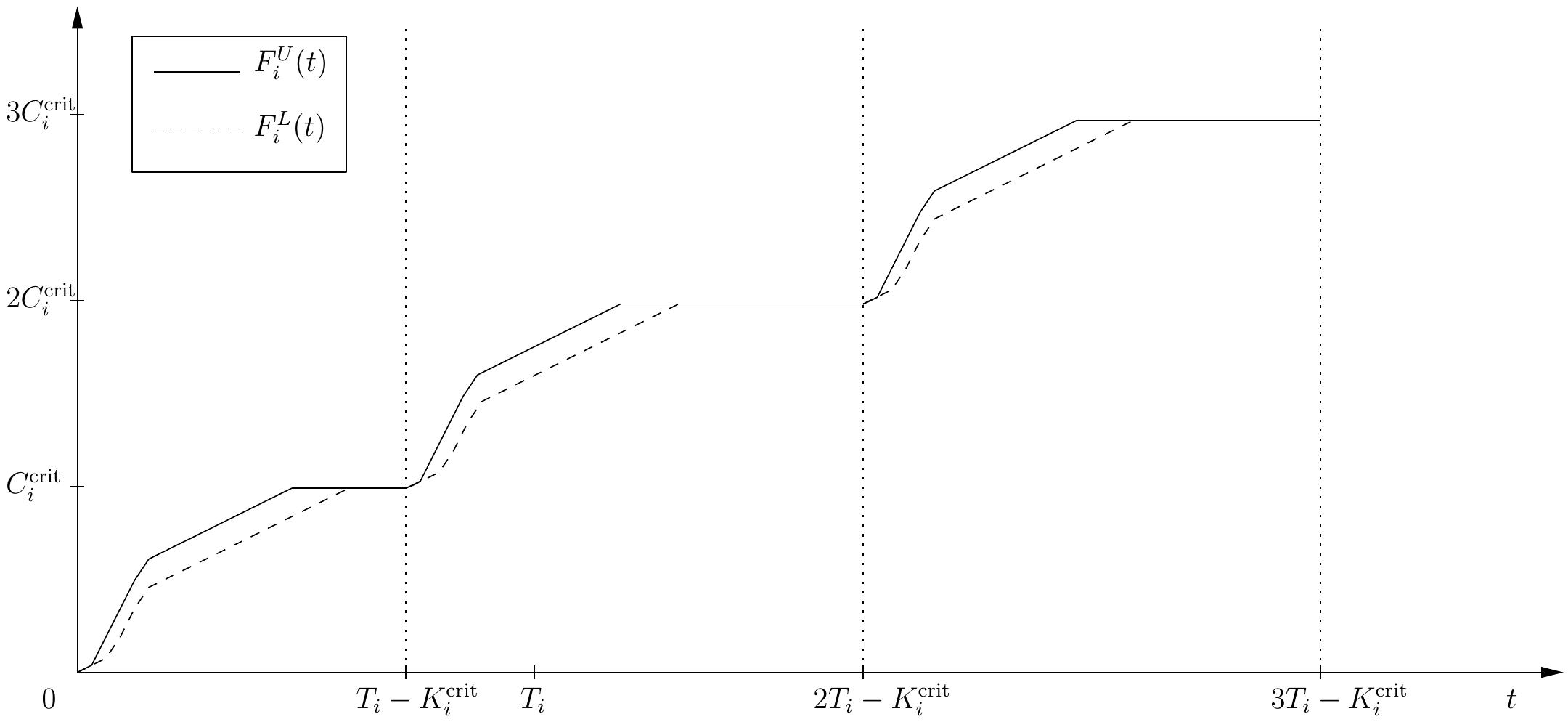}
  \caption{Extreme cases for $\da$ task $\tau_i$ execution requirements}
  \label{fig:FUL}
  \end{center}
\end{figure}

A lower-bound on the workload request of $\tau_i$ at any time $t$ (see~Figure~\ref{fig:FUL}) is thus defined as follows. 
% \begin{equation} \hspace{-.03cm}
% \resizebox{0.15\hsize}{!}{$ F_i^L(t) \equals $}
% \begin{cases}
% \resizebox{0.6\hsize}{!}{$ \sum_{\tau_i^j \in G_i} f^{L}_{i,j}(t) \hspace{1.2cm} \text{ if } t < T_i - \maxcritk{i} $} \\\\
% \resizebox{0.66\hsize}{!}{$ \left\lfloor \frac{t + \maxcritk{i}}{T_i} \right\rfloor \cdot C_i + \sum_{\tau_i^j \in G_i} L_i^j(t) \:\: \text{ otherwise} $}
% \end{cases}
% \label{FLB}
% \end{equation}
\begin{equation} \hspace{-.03cm}
\resizebox{0.15\hsize}{!}{$ F_i^L(t) \equals $}
\begin{cases}
\resizebox{0.2\hsize}{!}{$ \sum_{\tau_i^j \in G_i} f^{L}_{i,j}(t)$} & \resizebox{0.25\hsize}{!}{$\text{ if } t < T_i - \maxcritk{i} $} \\ \\
\resizebox{0.38\hsize}{!}{$ \left\lfloor \frac{t + \maxcritk{i}}{T_i}   \right\rfloor \cdot C_i + \sum_{\tau_i^j \in G_i}  L_i^j(t) $} & \resizebox{0.15\hsize}{!}{$ \text{otherwise} $}
\end{cases}
\label{FLB}
\end{equation}

\noindent where 
\resizebox{0.52\hsize}{!}{$ L_i^j(t) = f^{L}_{i,j}\left( (t + \maxcritk{i}) \mod T_i   \right) $}.

Eq.~\eqref{FUB} and Eq.~\eqref{FLB} can be used to obtain an upper-bound on the workload request of $\tau_i$ in a time window of length~$\Delta$. To this end, we consider that an activation of $\tau_i$ occurs $\phi$ time units prior to the beginning of the targeted window. Then two situations can lead to increasing the workload request of $\tau_i$ in the window: $1)$~At the beginning of the window, say at time~$0$, $\tau_i$ suffers the maximum possible interference and its nodes are released as late as possible; $2)$~At the end of the window, say at time $\Delta$, $\tau_i$ does not suffer any interference and its nodes are released as early as possible.

\begin{lemma}[Upper-bound on the Workload of $\tau_i$ with Carry-in]
\label{lemma:2}
Assuming task $\tau_i$ has carry-in, an upper-bound on its workload request in a window of length $\Delta$ is given by:
 \begin{equation}
\resizebox{0.7\hsize}{!}{$ W_i^{\operatorname{CI}}(\Delta) \equals \ds\max_{\phi \in [0 , \maxcomp{i}]} \left\{F^{U}_i(\Delta + \phi ) - F^{L}_i(\phi ) \right\}$}
 \label{max_workload}
\end{equation}
\end{lemma}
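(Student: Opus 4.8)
Fix an arbitrary window of length $\Delta$ and, without loss of generality, call it $[0,\Delta]$. Since $\tau_i$ has carry-in, let $-\phi$ (with $\phi\ge 0$) be the release instant of the job of $\tau_i$ that is still active at time $0$, and let $G_i(\cdot)$ denote, in an arbitrary legal schedule, the cumulative amount of $\tau_i$-workload that has been \emph{released} as measured from that activation. Because $D_i\le T_i$, every earlier job of $\tau_i$ has already completed by time $0$, so the $\tau_i$-workload that can be executed inside $[0,\Delta]$ is at most the workload released by the end of the window minus the workload released before the window starts, i.e. at most $G_i(\Delta+\phi)-G_i(\phi)$. This is exactly the quantity that the two scenarios described just before the lemma aim to maximise: releases pushed as late as possible near time $0$, and as early as possible near time $\Delta$.

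The core of the argument is to bound the two terms by $F_i^U$ and $F_i^L$ respectively. For the tail, $G_i(\Delta+\phi)\le F_i^U(\Delta+\phi)$: by Eq.~\eqref{eq:8} the offset $e_i^j$ is a lower bound on the release instant of node $\tau_i^j$ relative to the start of the entry node, so $f^{U}_{i,j}$ over-counts that node's released workload and $\sum_{\tau_i^j\in G_i} f^{U}_{i,j}$ handles the current activation; the term $\lfloor (t+\maxcritk{i})/T_i\rfloor\, C_i$ together with the shift $\maxcritk{i}=R_i-\maxcomp{i}$ accounts for all later activations, which are at least $T_i$ apart and whose earliest feasible alignment with the window is governed by the fact that the carry-in job may finish up to $R_i$ after its release. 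For the head, $G_i(\phi)\ge F_i^L(\phi)$: by Eq.~\eqref{eq:11} the offset $\ell_i^j$ is an upper bound on the release instant of $\tau_i^j$ relative to the entry node, and $\tau_i^1$ is dispatched at most $\maxcritk{i}$ after release, so $f^{L}_{i,j}$ under-counts the released workload and $F_i^L$ stays below the true amount released before the window. Since the first estimate is a valid over-estimate in the regime where scenario~(2) applies (late in the window, early releases) and the second is a valid under-estimate in the regime where scenario~(1) applies (start of the window, late releases), subtracting is sound: the $\tau_i$-workload in the window is at most $F_i^U(\Delta+\phi)-F_i^L(\phi)$.

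It then remains to restrict the range of $\phi$. All release offsets $e_i^j$ and $\ell_i^j$ are lengths of paths inside $G_i$ and hence are at most the critical-path length $\maxcomp{i}$; consequently, for $\phi\ge\maxcomp{i}$ the carry-in activation has already released its entire $C_i$ units, the $F_i^L(\phi)$ term has ``saturated'' on that job, and shifting $\phi$ by a further period $T_i$ adds exactly one $C_i$ to both $F_i^U(\Delta+\phi)$ and $F_i^L(\phi)$ through their floor terms, leaving the difference unchanged. Therefore the supremum over $\phi\ge 0$ is attained for some $\phi\in[0,\maxcomp{i}]$, which gives $W_i^{\operatorname{CI}}(\Delta)=\max_{\phi\in[0,\maxcomp{i}]}\{F_i^U(\Delta+\phi)-F_i^L(\phi)\}$ as an upper bound on the workload of $\tau_i$ in a window of length $\Delta$, as claimed.

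The main obstacle is the second step: one must make fully rigorous that $F_i^U$ and $F_i^L$ are legitimate one-sided estimates and, in particular, that the $\maxcritk{i}$ shift correctly captures the \emph{worst} relative position of the carry-in job and the window boundary, so that $F_i^U(\Delta+\phi)-F_i^L(\phi)$ can never undercount the workload that is actually available for execution in $[0,\Delta]$; reconciling the ``late-release'' picture used for the head with the ``early-release'' picture used for the tail is the delicate point. By comparison, the reduction to $\phi\in[0,\maxcomp{i}]$ is routine once the bound $e_i^j,\ell_i^j\le\maxcomp{i}$ is observed.
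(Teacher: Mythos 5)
Your proof follows the same route as the paper's own proof sketch: you express the in-window workload as a difference of cumulative workload functions evaluated at the two window borders, you justify $F_i^U$ as an over-estimate via the earliest release offsets $e_i^j$ and $F_i^L$ as an under-estimate via the latest release offsets $\ell_i^j$, and you cut the search space for $\phi$ down to $[0,\maxcomp{i}]$ by a saturation-plus-periodicity argument. Structurally the two arguments coincide. However, there is one genuine gap, and it sits exactly where you yourself locate ``the main obstacle.'' Your central inequality --- that the workload executable inside $[0,\Delta]$ is at most the workload released by the end of the window minus the workload released before the window starts, i.e.\ at most your $G_i(\Delta+\phi)-G_i(\phi)$ --- is false as stated for a carry-in job. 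Carry-in means precisely that some workload released before time $0$ has \emph{not} been executed by time $0$, and that workload does execute inside the window; the correct accounting is (released by the end) minus (\emph{completed} by the start), and since completed $\le$ released, subtracting all of $G_i(\phi)$ under-counts rather than over-counts. Your subsequent step $G_i(\phi)\ge F_i^L(\phi)$ does not repair this, because a lower bound on \emph{released} workload is not a lower bound on \emph{completed} workload.

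The paper closes this hole differently: it never subtracts the released workload of the actual schedule. It fixes the worst-case reference scenario (release at $-\maxcritk{i}$, the job fully blocked on all cores until time $0$) and then runs an exchange argument node by node: if some node $\tau_i^k$ is released earlier than $\ell_i^k$, the workload of $\tau_i^k$ remaining \emph{after} the window start can only decrease, so the ``all nodes released at $\ell_i^j$'' scenario dominates every legal schedule with respect to in-window workload, and $F_i^L(\phi)$ is the portion of that \emph{dominating} scenario's workload falling strictly before the window --- which is what may legitimately be subtracted. A symmetric exchange with $e_i^j$ handles the right border. To make your write-up sound you would need to replace the released-workload accounting with this dominance step (or an equivalent argument that whatever you subtract is guaranteed to be finished before the window opens). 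Your reduction to $\phi\in[0,\maxcomp{i}]$ is at the same level of rigor as the paper's own --- both gloss over the stretch between the saturation of the carry-in job and the next release at $T_i-\maxcritk{i}$ --- so I would not count that part against you.
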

\begin{proofsketch}
We consider an activation of $\tau_i$ occurring at time $t_r = - \maxcritk{i} = -R_i + \maxcomp{i}$. The worst-case scenario for task $\tau_i$ is when it is prevented from execution on any core by higher priority tasks in the interval $[-\maxcritk{i},0]$. Let us assume this worst-case scenario and let us assume that all nodes $\tau_i^j \in G_i$ are released at time $\ell_i^{j}$ but one specific node $\tau_i^k$ is released at time $\ell_i^{k'} < \ell_i^{k}$. Since $f^{L}_{i,k}(t)$ is a lower-bound on the workload request of $\tau_i^k$ at any time $t$, it follows that the workload executed after $t$, when $\tau_i^k$ is released at time $\ell_i^{k}$, is greater than or equal to the workload request of $\tau_i^k$ when it is released at time $\ell_i^{k'}$. Hence on the left border of the window of length $\Delta$ (i.e., at the beginning of the window), if the nodes are assumed to be released as late as possible, then the workload request in the window is maximized. On the right border of the window (i.e., at 
the end of the window), we assume the earliest release time of all the nodes $\tau_i^j \in G_i$ but one specific node $\tau_i^k$. By applying the same logic, it follows that the workload request in the window is maximized since the nodes are assumed to be released as early as possible and $f^{U}_{i,k}(t)$ is an upper-bound on the workload request of $\tau_i^k$ at any time $t$. 
		
Now, let $n_i^{\text{pmax}}$ denote the maximum number of parallel nodes in $G_i$. We recall that the summation of the workload requests of all the nodes $\tau_i^j \in G_i$ is a piecewise linear function, where each segment has its first derivative in the interval $[0, n_i^{\text{pmax}}]$. In order to compute the maximum workload request of each $\da$ task $\tau_i$ in an interval of length $\Delta$, we must evaluate the workload request in all windows of length $\Delta$ assuming an offset $\phi \geqslant 0$. Since on the one hand the first derivative of $F_i^U(.)$ (resp. the first derivative of $F_i^L(.)$) is clearly periodic from time $T_i - \maxcritk{i}$ with a period $T_i$ (see Fig.~\ref{fig:FUL}), and on the other hand, the next activation of $\tau_i$ occurs only at time $t_r^{\text{next}} = t_r + T_i = T_i - \maxcritk{i}$, it is not necessary to check the offsets $\phi$ over $\maxcomp{i}$ as there is no extra workload after $\maxcomp{i}$ by construction. Hence $\phi \in [0, \maxcomp{i}]$ and 
the lemma follows.
\end{proofsketch}

In order to obtain the solution of Eq.~\eqref{max_workload}, instead of exaustively testing all the values of $\phi$ in the continuous interval $[0, \maxcomp{i}]$, we derive the finite set $V_i(\Delta)$ of offsets $\phi$ which maximizes it hereafter.

As previously mentioned, both $F^{U}_i(\cdot)$ and $F^{L}_i(\cdot)$ are piecewise linear functions. Hence, the set of points where the first derivative of $F^{U}_i(\cdot)$ increases and the set points where the first derivative of $F^{L}_i(\cdot)$ decreases should be considered respectively at the left and at the right border of the targeted window of length $\Delta$. The points in these sets maximize the workload request in the window. Formally, let $\Gamma_i(\Delta) \equals \Gamma_i^1(\Delta) \cup \Gamma_i^2(\Delta)$ where $\Gamma_i^1(\Delta) \equals \{ \phi \in [0, \maxcomp{i}], \: \text{the first derivative of} \: F^{L}_i(\phi) \: \text{increases} \}$ and $\Gamma_i^2(\Delta) \equals \{ \phi \in [0, \maxcomp{i}], \: \text{the first derivative of} \: F^{U}_i(\Delta + \phi) \: \text{decreases}\}$. Since $0 \leqslant \phi \leqslant \maxcomp{i}$, then for each node $\tau_i^j$, the first derivative of $F^{L}_i(\cdot)$ can increase only at points $\ell_i^j$. Similarly, the first derivative of $F^{U}_i(\Delta + \
cdot)$ can decrease only at points $k T_i - \maxcritk{i} + e_i^j + c_i^j - \Delta$ such that $k \in \mathbb{N}$ and $\Delta \leqslant  k T_i - \maxcritk{i} $ $\leqslant\Delta + \maxcomp{i}$. Therefore $V_i(\Delta)$ can be defined as follows.
 \begin{align}
 &\resizebox{0.85\hsize}{!}{$V_i(\Delta) \equals \ds\bigcup_{\tau_i^j \in G_i} 
\left( \left\{\ell_i^j \right\}  \cup \left\{ k T_i - \maxcritk{i} + e_i^j + c_i^j - \Delta,  \right. \right. $} \nonumber \\
& \resizebox{0.85\hsize}{!}{$\left. \left. \:  \text{such that} \: k \in \mathbb{N} \: \text{and} \: \Delta \leqslant  k T_i - \maxcritk{i} \leqslant \Delta + \maxcomp{i}\right\} \right)$}
\end{align}
The computation of $W_i^{\operatorname{CI}}(\Delta)$ for each $\tau_i$ (with $i \in [1, n]$) makes it easy to assess an upper-bound on the interference it will induce on the workload of the lower priority tasks in any given time window. From~\cite{baker}, it has been proven that every unit of execution of a LL task can interfere for at most $\frac{1}{m}$ units on the workload request of any other LL task with a lower priority. Thus, an upper-bound on the interference suffered by the $\da$ task $\tau_i$ in a window of size $\Delta$ is provided as:
\begin{equation}
\resizebox{0.46\hsize}{!}{$ I_i(\Delta) \equals \frac{1}{m} \cdot \displaystyle\sum_{j \in hp(\tau_i)} W_j^{\operatorname{CI}}(\Delta) $}
 \label{interf}
\end{equation}
where $hp(\tau_i)$ is the set of tasks with a higher priority than $\tau_i$. A sufficient schedulability condition for a $\da$ task-set $\mathcal{T}$ is derived from Eq.~\eqref{interf} as follows.

\begin{theorem}[Sufficient schedulability condition\label{responsetime}]
A $\da$ task-set $\mathcal{T}$ is schedulable on a $m$-homogeneous multicores using a $\gftp$ scheduler if:
\begin{equation}\label{eq:sched_cond}
\resizebox{0.28\hsize}{!}{$ \forall \tau_i \in \mathcal{T}, R_i \leqslant D_i $}
\end{equation}
where $R_i$ is computed by the following fixed-point algorithm. 
\begin{equation}
 \begin{cases}
 R_i^{\{0\}} =  R_i^{\operatorname{isol}} &\text{if} \; k=0 \nonumber\\ 
 R_i^{\{k\}} = I_i\left(R_i^{\{k-1\}} \right) + R_i^{\operatorname{isol}} &\text{if} \; k \geqslant 1 \nonumber
\end{cases}
\end{equation}
Note: This iterative algorithm stops as soon as for any $k \ge 1$, $R_i^{\{k\}} = R_i^{\{k-1\}}$ or $R_i^{\{k\}} > D_i$. In the latter case, $\tau_i$ is deemed not schedulable.
\end{theorem}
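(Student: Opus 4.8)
The plan is to establish the theorem by showing that the fixed-point iteration computes a safe upper bound on $R_i$, and that if this upper bound stays at or below $D_i$ then no deadline miss can occur. First I would argue the base case: by Lemma~\ref{lemma:self-interference1} together with Eq.~\eqref{eq:isol}, $R_i^{\operatorname{isol}} = \maxcomp{i} + \frac{1}{m}\cdot \mathrm{SI}_i$ is a valid response-time bound for $\tau_i$ when it suffers \emph{no} interference from higher-priority tasks; this handles $R_i^{\{0\}}$. The inductive step is the crux: assuming $R_i^{\{k-1\}}$ is a valid upper bound on the length of any interval $[\text{release}, \text{completion}]$ of $\tau_i$, I would show that within such an interval of length at most $R_i^{\{k-1\}}$, the total interference from higher-priority tasks is bounded by $I_i\!\left(R_i^{\{k-1\}}\right)$. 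This uses Lemma~\ref{lemma:2} to bound each higher-priority task's workload contribution by $W_j^{\operatorname{CI}}(\Delta)$ with $\Delta = R_i^{\{k-1\}}$, and the Baker--Cirinei result~\cite{baker} that each unit of higher-priority execution delays $\tau_i$ by at most $\frac{1}{m}$ of a unit on an $m$-core platform. Adding the self-interference term $R_i^{\operatorname{isol}}$, which accounts for the critical-path length plus the $\frac{1}{m}$-scaled self-interference along $\critp{i}$, yields that $\tau_i$ must complete within $I_i\!\left(R_i^{\{k-1\}}\right) + R_i^{\operatorname{isol}} = R_i^{\{k\}}$.

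Next I would address convergence and monotonicity of the iteration. Since $W_j^{\operatorname{CI}}(\cdot)$ is non-decreasing in $\Delta$ (larger windows admit at least as much workload), the map $R \mapsto I_i(R) + R_i^{\operatorname{isol}}$ is monotone non-decreasing, and $R_i^{\{0\}} = R_i^{\operatorname{isol}} \le R_i^{\{1\}}$; hence the sequence $\{R_i^{\{k\}}\}$ is non-decreasing. Two cases then arise: either it reaches a fixed point $R_i^{\{k\}} = R_i^{\{k-1\}} =: R_i$, in which case by the inductive argument $R_i$ is a genuine upper bound on the worst-case response time and the condition $R_i \le D_i$ certifies schedulability of $\tau_i$; or the sequence exceeds $D_i$ at some step, in which case — since every $R_i^{\{k\}}$ is itself only a \emph{bound} and the true response time could be smaller — the test is inconclusive and conservatively reports $\tau_i$ as unschedulable. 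I would note that termination of the iteration is guaranteed because the increments arise from the piecewise-linear, eventually-periodic structure of $F_i^U$ and $F_i^L$, so the sequence either fixes or escapes $D_i$ in finitely many steps (in practice bounded using the discrete offset set $V_i(\Delta)$ derived above).

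Finally I would assemble the global argument across the whole task-set: proceeding in priority order, the analysis of $\tau_i$ relies only on the $W_j^{\operatorname{CI}}$ bounds for $j \in hp(\tau_i)$, which are self-contained per-task quantities, so establishing $R_i \le D_i$ for every $\tau_i$ independently suffices, and the conjunction $\forall \tau_i \in \mathcal{T},\, R_i \le D_i$ implies every node of every task meets its deadline, i.e., $\mathcal{T}$ is $\gftp$-schedulable.

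I expect the main obstacle to be the inductive step — specifically, rigorously justifying that superimposing the higher-priority interference bound $I_i(\Delta)$ on top of the in-isolation bound $R_i^{\operatorname{isol}}$ is sound. The subtlety is that $R_i^{\operatorname{isol}}$ already "spreads" the self-interference $\mathrm{SI}_i$ across $m$ cores assuming those cores are available; one must argue that whenever a higher-priority task occupies a core it displaces at most $\frac{1}{m}$ of a unit of $\tau_i$'s progress along the critical path, so that the two $\frac{1}{m}$-scaled contributions (self and external) add correctly rather than interacting in a way that underestimates the delay. This requires carefully invoking the work-conserving nature of $\gftp$ and the fact that Eq.~\eqref{eq:response} was proven (via Lemma~\ref{lemma:self-interference1}) to hold for the critical partial path between \emph{any} pair of fringe nodes, so it applies uniformly along $\critp{i}$ even as external interference shifts node release times within the bounds $[e_i^j, \ell_i^j]$.
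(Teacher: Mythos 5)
Your proof is correct and follows essentially the same route as the paper: the paper's own proof of this theorem is a single sentence citing Lemma~\ref{lemma:1}, Lemma~\ref{lemma:2}, Eq.~\eqref{eq:isol} and Eq.~\eqref{interf}, and your proposal supplies exactly the standard response-time-analysis scaffolding (base case from $R_i^{\operatorname{isol}}$, monotone fixed-point iteration, per-priority-level assembly) that the paper leaves implicit. The obstacle you flag at the end --- rigorously justifying that the $\frac{1}{m}$-scaled self-interference and the $\frac{1}{m}$-scaled external interference superimpose additively without underestimating the delay --- is precisely the step the paper glosses over entirely, so your elaboration is, if anything, more careful than the published argument.
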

\begin{proof}
This theorem follows directly from Lemma~\ref{lemma:1}, Lemma~\ref{lemma:2}, Eq.~\eqref{eq:isol} and Eq.~\eqref{interf}.
\end{proof}

\section{Reduction of the number of tasks with carry-in\label{sect:reduction}}

\begin{figure}[h!]
\begin{center}
\includegraphics[width=1\linewidth]{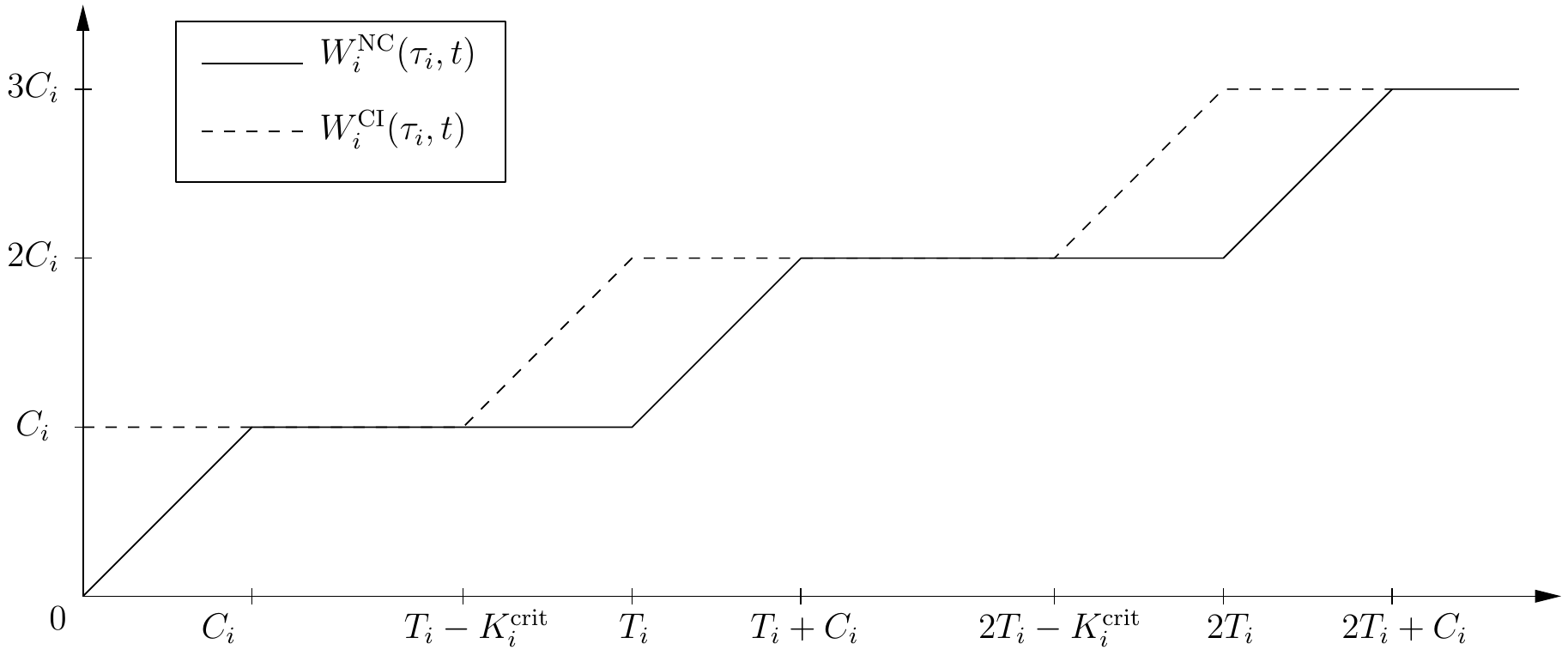}
\caption{Functions \small $W^{\operatorname{NCseq}}(\tau_j,\Delta)$ and $W^{\operatorname{CIseq}}(\tau_j,\Delta)$ for task $\tau_j$}
\label{fig:Wfunctions}
\end{center}
\end{figure}

Rather than considering that each $\da$ task has a carry-in as in Section~\ref{sec:upper_sched}, the intuitive idea of this section consists of reducing the number of tasks with carry-in to at most ($m-1$) tasks (where $m$ is the number or cores). Since it is usually the case that $m \ll n$, we thus obtain a tighter upper-bound on the interference that each task may suffer at run-time and finally a better schedulability condition for each task. To accomplish this, first let us recall some fondamental results regarding the ``Liu \& Layland (LL) task model''. 

\noindent {\bf Upper-bound on the workload request of a LL task without carry-in.} Let $\tau_j$ be a LL task with no pending workload at the beginning of a window of length $\Delta$. An upper-bound on its workload request in this window is recalled (see~\cite{BertognaRTA,Guan}):
\begin{equation}
\resizebox{0.75\hsize}{!}{$W^{\operatorname{NC-LL}}(\tau_j,\Delta) = \left\lfloor \frac{\Delta}{T_j} \right\rfloor \cdot C_j + \min(\Delta \mod T_j, C_j)$}
\label{noncarryin}
\end{equation}
\noindent {\bf Upper-bound on the workload request of a LL task with carry-in.} Let $\tau_k$ be a LL task with some pending workload at the beginning of a window of length $\Delta$. An upper-bound on its workload request in this window is recalled (see~\cite{BertognaRTA,Guan}):
\begin{align}
&\resizebox{0.65\hsize}{!}{$ W^{\operatorname{CI-LL}}(\tau_k,\Delta) = \left\lfloor \frac{\max(\Delta-C_k,0)}{T_k} \right\rfloor \cdot C_k + C_k$} \nonumber \\ 
&\resizebox{0.75\hsize}{!}{$ + \max\left((\max(\Delta - C_k, 0) \mod T_k) - (T_k - R_k), C_k\right)$}
 \label{carryin}
\end{align}
\noindent {\bf Extra workload request of a LL task.} The difference between the upper-bounds --{\em with} and {\em without}-- carry-in of a LL Task $\tau_i$ in a window of length $\Delta$ is thus recalled as:
\begin{equation}
\resizebox{0.72\hsize}{!}{$ W^{\operatorname{diff-LL}}(\tau_i,\Delta) \equals W^{\operatorname{CI-LL}}(\tau_i,\Delta) - W^{\operatorname{NC-LL}}(\tau_i,\Delta)$}
\end{equation}
\noindent {\bf Upper-bound on the interference of a LL task.} Assume a $\gftp$ scheduler and a $\da$ task-set $\mathcal{T}$ in which tasks are in a decreasing priority order. An upper-bound on the interference that higher priority tasks induce on the execution of task $\tau_i$ in a targeted window of length $\Delta$ is recalled~(see~\cite{BertognaRTA,Guan}):
\begin{align}
&\resizebox{0.75\hsize}{!}{$ I_i^{\operatorname{LL}}(\Delta) \equals  \frac{1}{m} \cdot \left( \ds\sum_{l=1}^{m-1} \max^l_{ \tau_j \in \{\tau_1,\ldots, \tau_{i-1}\}} W^{\operatorname{diff-LL}}(\tau_j,\Delta) \right) $} \nonumber \\ 
&\phantom{xxxxxx} \resizebox{0.48\hsize}{!}{$ + \ds\sum_{\tau_j \in \{\tau_1,\ldots, \tau_{i-1} \}} W^{\operatorname{NC-LL}}(\tau_j,\Delta)$}
\label{SEQ_I}
\end{align}
In Eq.~\eqref{SEQ_I}, $\ds\max^l_{\tau_h \in \{\tau_1,\ldots, \tau_{i-1}\}}(\cdot)$ returns the $\ell^{th}$ greatest value among the workload of tasks with a higher priority than $\tau_i$. For a LL task-set, it has been proven in~\cite{Guan} that a worst-case scenario in terms of total workload request in a targeted window of length $\Delta$ can be constructed by considering ($m-1$) tasks with carry-in. Therefore, it follows that the workload induced by these carry-in tasks in this window of concern cannot exceed the difference between ($i$)~the maximum workload assuming no carry-in for all tasks (see~Eq.~\eqref{noncarryin}) and ($ii$)~the workload assuming the carry-in scenario (see~Eq.~\eqref{carryin}). Consequently, from the view-point of task $\tau_i$, if $i < m$, then $\tau_i$ does not suffer any interference, otherwise, if $i \geqslant m$, then we can choose the ($m-1$) tasks among $\{\tau_1,\ldots, \tau_{i-1}\}$ such that the difference between the workload assuming the non-carry-in scenario and the 
workload assuming the carry-in scenario is the largest possible for each selected task. By summing up these differences and the remaining ``$(i-1)-(m-1)= i-m$'' workloads corresponding to the tasks without carry-in, an upper-bound on the workload that higher priority tasks induce in the window of length $\Delta$ is computed.

Before we extend Eq.~\eqref{SEQ_I} to the scheduling problem of $\da$ tasks using a $\gftp$ scheduler, let us present an alternative formal proof to the one provided by Guan et al.~\cite{Guan} for the analysis considering ($m-1$) tasks with carry-in.

\begin{theorem}[Eq.~\eqref{SEQ_I} is an Upper-bound for LL tasks~\cite{Guan}\label{m-1interference}]
Let $\tau$ be a feasible LL task-set scheduled by using a $\gftp$ scheduler on a $m$-homogeneous multicores. Let task $\tau_i \in \tau$. Eq.~\eqref{SEQ_I} is an upper-bound on the interference on $\tau_i$ in any window of length $\Delta$.
\end{theorem}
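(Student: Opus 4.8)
The plan is to give a direct busy-interval argument. Fix $\tau_i$ and a job $J_i$ of it that completes at time $f_i$, and let $t_0 \leqslant f_i$ be the earliest instant such that at every point of $[t_0, f_i)$ either $J_i$ is executing or all $m$ cores are running jobs of tasks in $hp(\tau_i)$; such a $t_0$ is finite because a feasible task set has bounded busy intervals. Put $\Delta = f_i - t_0$; this is the ``window of length $\Delta$'' of the statement, anchored at the start of the higher-priority busy interval containing $J_i$. The first observation is that because the $\gftp$ scheduler is work-conserving and priority-respecting, whenever $J_i$ is pending and not executing all $m$ cores must be running higher-priority work (otherwise a core would be idle, or would run $J_i$ or a lower-priority job). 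Hence at every instant where $J_i$ is blocked the higher-priority tasks collectively perform $m$ units of work per unit of blocked time, so the total blocking suffered by $J_i$ in its busy interval is at most $\frac{1}{m}$ times the higher-priority work executed in $[t_0, t_0+\Delta)$.

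The key step, which is also the main obstacle, is to show that at most $m-1$ tasks of $hp(\tau_i)$ carry execution across $t_0$. By minimality of $t_0$, immediately before $t_0$ not all $m$ cores run higher-priority jobs. If some core is idle there, work-conservation forces no higher-priority job to be pending just before $t_0$, so none carries in. Otherwise all cores are busy but at least one runs $J_i$ or a lower-priority job, which a work-conserving priority scheduler allows only when strictly fewer than $m$ higher-priority jobs are pending just before $t_0$; since the constrained-deadline model permits at most one pending job per task, at most $m-1$ distinct tasks of $hp(\tau_i)$ then have a carry-in job at $t_0$, every other higher-priority job in the window being released at or after $t_0$. Making this boundary argument airtight --- in particular fixing once and for all that the relevant window is the busy-interval-anchored $[t_0,t_0+\Delta)$ and not an arbitrarily placed one --- is the delicate part; the remainder is accounting.

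Given the count, I would bound the higher-priority work executed in $[t_0,t_0+\Delta)$ task by task, using the standard Liu \& Layland release patterns already recalled in the excerpt: a task of $hp(\tau_i)$ contributes at most $W^{\operatorname{NC-LL}}(\tau_j,\Delta)$ of Eq.~\eqref{noncarryin} if it has no carry-in, and at most $W^{\operatorname{CI-LL}}(\tau_j,\Delta)$ of Eq.~\eqref{carryin} if it does, the latter being exactly the pattern in which the carry-in job finishes as late as $R_j$ after its release and subsequent jobs arrive periodically. Since $W^{\operatorname{CI-LL}}(\tau_j,\Delta) \geqslant W^{\operatorname{NC-LL}}(\tau_j,\Delta)$, i.e.\ $W^{\operatorname{diff-LL}}(\tau_j,\Delta)\geqslant 0$, giving a task the carry-in bound in place of the no-carry-in bound never decreases the total; so, subject to at most $m-1$ tasks being granted a carry-in, the executed higher-priority work is at most $\sum_{\tau_j \in \{\tau_1,\dots,\tau_{i-1}\}} W^{\operatorname{NC-LL}}(\tau_j,\Delta)$ plus the $m-1$ largest values of $W^{\operatorname{diff-LL}}(\tau_j,\Delta)$ among the higher-priority tasks --- exactly $m \cdot I_i^{\operatorname{LL}}(\Delta)$. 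Combining with the $\frac{1}{m}$-bound of the first paragraph (Baker and Cirinei~\cite{baker}) shows the blocking of $J_i$ is at most $I_i^{\operatorname{LL}}(\Delta)$, which is the claim; as a by-product, since the prefix $[t_0,r_i)$ --- where $r_i$ is the release of $J_i$ --- already consumes higher-priority work on all $m$ cores, one also obtains $\Delta \leqslant C_i + I_i^{\operatorname{LL}}(\Delta)$, which is what makes the response-time recurrence sound.
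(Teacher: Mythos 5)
Your proof is correct, but it is not the route the paper takes --- it is essentially the original busy-interval argument of Guan et al.~\cite{Guan}, to which the paper explicitly set out to give an \emph{alternative}. You anchor $t_0$ to the particular job $J_i$ under analysis, as the earliest instant from which, up to $J_i$'s completion, either $J_i$ runs or all $m$ cores run higher-priority work, and you extract the ``at most $m-1$ carry-in tasks'' property from the minimality of that instant: just before $t_0$ some core is idle or runs work of priority no higher than $\tau_i$'s, so a work-conserving priority scheduler admits fewer than $m$ pending higher-priority jobs there. The paper instead takes $t_0$ to be the \emph{globally} latest instant at which some core is idle --- which caps at $m-1$ the number of tasks with any pending workload at all --- and then connects this instant to $\tau_i$ by a window-sliding step: a window beginning at $\tau_i$'s activation $t_1 \geqslant t_0$ is slid left toward $t_0$, the interference being claimed only to grow since all cores are busy on $[t_0,t_1]$, with the maximum attained when $\tau_i$ is released synchronously with the higher-priority tasks at $t_0$. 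The accounting that follows (the no-carry-in bound for every higher-priority task, plus the $m-1$ largest values of $W^{\operatorname{diff-LL}}$, all scaled by $\frac{1}{m}$ via the Baker--Cirinei argument) is identical in both proofs. What your version buys is a window tied directly to the job's response time, so the soundness of the recurrence in Theorem~\ref{responsetime} falls out as the by-product you note; what the paper's version buys is a carry-in bound established independently of the job under analysis, at the price of the extra sliding argument. One small caveat on your side: you correctly flag that the theorem's phrase ``any window of length $\Delta$'' must in the end be read as the busy-interval-anchored window, and your write-up should state explicitly that this is the only window the response-time analysis ever evaluates the bound on --- the paper's sliding step is its way of discharging the same point.
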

\begin{proof}
Since $\tau$ is feasible, let $t_0$ be the latest time-instant such that at least one core is idle at time $t_0 - \epsilon, \forall \epsilon \geqslant 0$, then at most ($m-1$) tasks have a carry-in workload at time instant $t_0 - \epsilon, \forall \epsilon \geqslant 0$. Let $\Delta_0$ be the window of length $\Delta$ starting at $t_0$. By considering the ($m-1$) tasks with the largest possible carry-in, we are conservative w.r.t. the workload request of the tasks with carry-in in $\Delta_0$. In the same vein, by considering ($i-m$) tasks without carry-in to be simultaneously released at time $t_0$ with the future activations of each of these tasks to occur as soon as it is legally permitted to do so, we are also conservative w.r.t. the workload request of the tasks without carry-in in $\Delta_0$.

Now let $\Delta_1$ be a window of length $\Delta$ starting at time $t_1 \geqslant t_0$ with the offset $\Phi \equals t_1 - t_0$. Assume that the beginning of $\Delta_1$ triggers the first activation of $\tau_i$. The earliest time-instant at which $\tau_i$ may start executing is at $t_f$ such that $t_f \geqslant \max(t_1, t_0 + \Delta)$. Indeed: ($i$)~$t_f \geqslant t_1$ (as $\tau_i$ cannot start executing before its activation time), and ($ii$)~$t_f \geqslant t_0 + \Delta$ (as all the $m$ cores are busy executing higher priority tasks between $t_0$ and $t_0 + \Delta$), by construction. As all the $m$ cores are busy executing higher priority tasks between $t_0$ and $t_1$, getting the first activation of $\tau_i$ at any time-instant in the interval $[t_0, t_f]$ (i.e., by sliding $\Delta_1$ towards $\Delta_0$), we can only increase the interference on $\tau_i$ (as the end of the execution of $\tau_i$ remains unchanged). The maximum interference is obtained when $\tau_i$ is activated simultaneously with all 
higher priority tasks, i.e., at time $t_0$ as then we have the largest possible carry-in as well as non-carry-in interference on the execution of $\tau_i$. The theorem follows.  
\end{proof}

\section{Extension to $\da$-based Tasks}
In this section we extend the reduction of the number of tasks with carry-in obtained in the framework of LL tasks to the $\da$ task model. To accomplish this end, we distinguish between the upper-bound on the workload request of the tasks {\em with} carry-in (see Eq.~\eqref{max_workload}) and {\em without} carry-in (which is detailed hereafter). These expressions will be considered when computing the interference of higher priority tasks on the execution of every $\da$ task $\tau_i$ in a window of length $\Delta$.

\noindent {\bf Upper-bound on the workload request of $\da$ tasks {\em without} carry-in.} Let us assume that $\tau_i$ is a $\da$ task {\em without} carry-in. An upper-bound on its workload request in a targeted window of length $\Delta$ can be constructed by distinguishing between the same two scenarios as those which allowed us to derive Eq.~\eqref{max_workload} in Section~\ref{sec:upper_sched}.

Regarding Scenario~($i$) where $\tau_i$ does not suffer any interference from higher priority tasks, an upper-bound on the workload request of $\tau_i$ at any time $t$ is defined as follows. 
\begin{equation} 
\resizebox{0.75\hsize}{!}{$ F_i^{\operatorname{U-NC}}(t) \equals \left\lfloor \frac{t}{T_i} \right\rfloor \cdot C_i + \sum_{\tau_i^j \in G_i} f^{U}_{i,j}(t \mod T_i) $} %\text{if } t \geqslant T_j + \maxcomp{j} - R_j
    \label{FUNC}
\end{equation}

In a similar manner, regarding Scenario~($ii$) where $\tau_i$ suffers the maximum interference from higher priority tasks, an lower-bound on the workload request of $\tau_i$ at any time $t$ is defined as follows. 
\begin{equation} 
\resizebox{0.75\hsize}{!}{$ F_i^{\operatorname{L-NC}}(t) \equals \left\lfloor \frac{t}{T_i} \right\rfloor \cdot C_i +  \sum_{\tau_i^j \in G_i} f^{L}_{i,j}(t \mod T_i) $}
\label{FLNC}
\end{equation}

As for the carry-in tasks case, Eq.~\eqref{FUNC} and Eq.~\eqref{FLNC} can be used to obtain an upper-bound on the workload request of $\tau_i$ in a time window of length $\Delta$ as claimed in Lemma~\ref{lemma:4}.
\begin{lemma}[Upper-bound on the Workload of $\tau_i$ Without Carry-in]
\label{lemma:4}
Assiming no carry-in of task $\tau_i$, an upper-bound on its workload request in a window of length $\Delta$ is given by:
 \begin{equation}
\resizebox{0.75\hsize}{!}{$ W_i^{\operatorname{NC}}(\Delta) \equals \ds\max_{\phi \in [0, \maxcomp{i}]} \left\{F^{\operatorname{U-NC}}_i(\Delta + \phi) - F^{\operatorname{L-NC}}_i(\phi) \right\} $}
 \label{max_workloadNCDAG}
\end{equation}
\end{lemma}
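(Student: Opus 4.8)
The plan is to mirror the argument used for Lemma~\ref{lemma:2} (the carry-in case), adapting it to the no-carry-in workload functions $F_i^{\operatorname{U-NC}}(\cdot)$ and $F_i^{\operatorname{L-NC}}(\cdot)$. First I would fix a targeted window of length $\Delta$ and an activation of $\tau_i$ occurring $\phi \geqslant 0$ time units before the window opens. As in the carry-in proof, the workload of $\tau_i$ inside the window is increased when (1) at the left border the nodes of the most recent activation are released as late as possible, which is exactly what $F_i^{\operatorname{L-NC}}(\phi)$ captures since $f^{L}_{i,j}$ uses the latest release $\ell_i^j$, and (2) at the right border the nodes of the activation overlapping the window's end are released as early as possible, captured by $F_i^{\operatorname{U-NC}}(\Delta+\phi)$ since $f^{U}_{i,j}$ uses the earliest release $e_i^j$. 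The key difference from Lemma~\ref{lemma:2} is that here there is no pending workload at time $0$, so $F_i^{\operatorname{U-NC}}$ and $F_i^{\operatorname{L-NC}}$ contain the clean $\lfloor t/T_i\rfloor\cdot C_i$ term with the per-node request evaluated at $t \bmod T_i$, rather than the shifted-by-$\maxcritk{i}$ version; the monotonicity argument (releasing a node later never increases its already-accumulated contribution, releasing it earlier never decreases it) transfers verbatim.

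The second half of the argument is the bound on the range of $\phi$. The summation $\sum_{\tau_i^j\in G_i} f^{U}_{i,j}(\cdot)$ (resp. $f^{L}_{i,j}$) is piecewise linear with slopes in $[0, n_i^{\text{pmax}}]$, and its first derivative is periodic with period $T_i$ by construction. Because $\tau_i$ has no carry-in, an activation placed $\phi$ units before the window contributes extra workload into the window only while that activation is still ``alive'', i.e., for $\phi$ up to the span of the graph, which is $\maxcomp{i}$; beyond $\phi = \maxcomp{i}$ the activation has fully completed before time $0$ and shifting it further only replaces it with a later periodic copy already accounted for by the floor term. Hence it suffices to range $\phi$ over $[0,\maxcomp{i}]$, and $W_i^{\operatorname{NC}}(\Delta)$ as defined in Eq.~\eqref{max_workloadNCDAG} is an upper bound.

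I would then note that, exactly as after Lemma~\ref{lemma:2}, the continuous maximization over $\phi\in[0,\maxcomp{i}]$ reduces to a finite evaluation: the maximum is attained at a point where $F_i^{\operatorname{L-NC}}(\phi)$ has an upward slope change (the $\ell_i^j$) or $F_i^{\operatorname{U-NC}}(\Delta+\phi)$ has a downward slope change (points of the form $kT_i + e_i^j + c_i^j - \Delta$ with $\Delta \leqslant kT_i \leqslant \Delta + \maxcomp{i}$), so the proof is constructive. The main obstacle I anticipate is not the monotonicity step — which is essentially identical to the carry-in case — but making the ``$\phi$ need not exceed $\maxcomp{i}$'' claim airtight in the no-carry-in setting: one must check carefully that an activation starting more than $\maxcomp{i}$ before the window cannot leave residual workload in $[0,\Delta]$, given that the per-node functions $f^{U}_{i,j}$, $f^{L}_{i,j}$ saturate at $c_{i,\max}^j$ and that the node offsets $e_i^j$, $\ell_i^j$ are both bounded above by $\maxcomp{i}$ by the definitions in Eq.~\eqref{eq:8} and Eq.~\eqref{eq:11}. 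Once that is settled, the lemma follows by combining the two border arguments with the periodicity of the derivative.
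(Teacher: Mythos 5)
Your proposal is correct and follows exactly the route the paper takes: the paper's own proof of this lemma is a one-line sketch stating that it ``follows the same reasoning as that of Lemma~\ref{lemma:2},'' and your argument is precisely that adaptation (left-border latest releases via $F_i^{\operatorname{L-NC}}$, right-border earliest releases via $F_i^{\operatorname{U-NC}}$, and the restriction of $\phi$ to $[0,\maxcomp{i}]$ by periodicity of the derivative). If anything, your write-up is more explicit than the paper's, particularly in flagging that the ``$\phi$ need not exceed $\maxcomp{i}$'' step deserves care in the no-carry-in setting.
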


\begin{proofsketch}
The proof sketch of this lemma follows the same reasoning as that of Lemma~\ref{lemma:2}.
\end{proofsketch}

From Lemma~\ref{lemma:2} and Lemma~\ref{lemma:4}, it follows that the difference between the upper-bounds --{\em with} and {\em without}-- carry-in for a $\da$ task $\tau_i$ in a window of length $\Delta$ is can be written as:
\begin{equation}
\resizebox{0.75\hsize}{!}{$  W^{\operatorname{diff-DAG}}(\tau_i,\Delta) \equals W^{\operatorname{CI}}(\tau_i,\Delta) - W^{\operatorname{NC}}(\tau_i,\Delta) $}
 \label{DIFFDAG}
\end{equation}

All the results presented so far enable us to present a tighter upper-bound on the interference of a $\da$ task $\tau_i$ together with the corresponding sufficient schedulability condition. 

\noindent {\bf Tighter Upper-bound on the Interference of a $\da$ Task.} Assume a $\gftp$ scheduler and a $\da$ task-set $\mathcal{T}$ in which tasks are in a decreasing priority order as in Section~\ref{sect:reduction}. An upper-bound on the interference that higher priority tasks induce on the execution of task $\tau_i$ in a targeted window of length $\Delta$ is obtained as follows.
\begin{align}
&\resizebox{0.8\hsize}{!}{$ I_i^{\da}(\Delta) \equals \frac{1}{m} \cdot \left(\ds\sum_{l=1}^{m-1} \max^l_{\tau_j \in \{\tau_1,\ldots, \tau_{i-1}\}}  W^{\operatorname{diff-DAG}}(\tau_j,\Delta) \right) $} \nonumber \\ 
&\phantom{xxxxxx} \resizebox{0.48\hsize}{!}{$ + \ds\sum_{\tau_j \in \{\tau_1, \ldots, \tau_{i-1}\}} W^{\operatorname{NC}}(\tau_j,\Delta)$}
\label{DAG_I}
\end{align}

Each term in Eq.~\eqref{DAG_I} is explained as the corresponding term in Eq.~\eqref{SEQ_I} and a tighter schedulability test follows.

\begin{theorem}[Tighter Sufficient Schedulability Condition\label{responsetime2}]
A $\da$ task-set $\mathcal{T}$ is schedulable on a $m$-homogeneous multicores using a $\gftp$ scheduler if:
\begin{equation}\label{eq:sched_cond2}
\resizebox{0.28\hsize}{!}{$ \forall \tau_i \in \mathcal{T}, R_i \leqslant D_i $}
\end{equation}
where $R_i$ is computed by the following fixed-point algorithm. 
\begin{equation}
 \begin{cases}
 R_i^{\{0\}} = R_i^{\operatorname{isol}} &\text{if} \; k=0 \nonumber\\ 
 R_i^{\{k\}} = I_i^{\da} \left(R_i^{\{k-1\}} \right) + R_i^{\operatorname{isol}} &\text{if} \; k \geqslant 1 \nonumber
\end{cases}
\end{equation}
Note: This algorithm also stops as soon as for any $k \ge 1$, $R_i^{\{k\}} = R_i^{\{k-1\}}$ or $R_i^{\{k\}} > D_i$. Again, in the latter case, $\tau_i$ is deemed not schedulable.
\end{theorem}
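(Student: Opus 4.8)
The plan is to mirror the proof of Theorem~\ref{responsetime}, replacing the pessimistic ``every task has carry-in'' interference bound of Eq.~\eqref{interf} by the refined bound $I_i^{\da}(\Delta)$ of Eq.~\eqref{DAG_I}, and to argue that this refinement is still a valid upper bound by transferring the $(m-1)$-carry-in argument of Theorem~\ref{m-1interference} from the LL setting to the $\da$ setting. First I would recall that, by Lemma~\ref{lemma:1} and Eq.~\eqref{eq:isol}, $R_i^{\operatorname{isol}} = \maxcomp{i} + \frac{1}{m}\,\mathrm{SI}_i$ upper-bounds the response time of $\tau_i$ when it suffers no interference from higher priority tasks, since the critical partial path dominates the self-interference for every pair of fringe nodes and hence for the extreme nodes. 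In the presence of higher priority work, any interval of length $R_i$ throughout which $\tau_i$ is pending splits into (a)~time during which $\tau_i$ itself (along one critical path) or its concurrent nodes occupy cores, bounded by $R_i^{\operatorname{isol}}$ through Lemma~\ref{lemma:1}, and (b)~time during which all $m$ cores are busy with strictly higher priority work, whose total amount is at most $m \cdot I_i^{\da}(R_i)$ execution units. Dividing by $m$ yields the recurrence $R_i^{\{k\}} = I_i^{\da}(R_i^{\{k-1\}}) + R_i^{\operatorname{isol}}$, and $R_i \le D_i$ at a fixed point certifies schedulability.

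The second step is to justify that $I_i^{\da}(\Delta)$ really upper-bounds the interference suffered by $\tau_i$ in a window of length $\Delta$. Here I would reuse the reasoning of Theorem~\ref{m-1interference} almost verbatim: let $t_0$ be the last instant before the pending window at which some core is idle; feasibility of $\mathcal{T}$ implies at most $m-1$ higher priority tasks carry workload across $t_0$, while every other higher priority task contributes only non-carry-in workload, bounded for a $\da$ task by $W_j^{\operatorname{NC}}(\Delta)$ via Lemma~\ref{lemma:4}. Each carry-in task contributes at most its extra workload $W^{\operatorname{diff-DAG}}(\tau_j,\Delta)$ of Eq.~\eqref{DIFFDAG} on top of its non-carry-in workload, and selecting the $m-1$ tasks with the largest such differences (the $\max^l$ terms of Eq.~\eqref{DAG_I}) is conservative. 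The subtle point is that the ``slide the activation of $\tau_i$ towards $t_0$ can only increase the interference'' argument of Theorem~\ref{m-1interference} must still hold; it does, because it relies solely on all $m$ cores being busy with higher priority work between $t_0$ and the window end — a property of the schedule, not of the internal precedence structure of the DAGs — and that structure has already been fully absorbed into $W_j^{\operatorname{CI}}$ and $W_j^{\operatorname{NC}}$ by Lemmas~\ref{lemma:2} and~\ref{lemma:4}.

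Finally I would address the fixed-point algorithm: each $W_j^{\operatorname{NC}}(\cdot)$ and $W_j^{\operatorname{CI}}(\cdot)$ is non-decreasing in $\Delta$, being a maximum over $\phi$ of differences of the non-decreasing piecewise-linear functions $F_j^{\operatorname{U-NC}}$, $F_j^{\operatorname{U}}$ shifted by $\Delta$ against a fixed term; hence $\Delta \mapsto I_i^{\da}(\Delta) + R_i^{\operatorname{isol}}$ is monotone, the iteration started at $R_i^{\{0\}} = R_i^{\operatorname{isol}}$ is non-decreasing, and it either converges to the least fixed point (the claimed $R_i$) or exceeds $D_i$, in which case no bound $\le D_i$ exists and $\tau_i$ is declared unschedulable. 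Combining the three steps establishes Eq.~\eqref{eq:sched_cond2}. The step I expect to be the main obstacle is the second one: rigorously carrying Theorem~\ref{m-1interference} over to $\da$ tasks, one must verify that neither the $(m-1)$-carry-in counting nor the window-sliding optimality argument implicitly exploited the LL property that a task emits one monolithic job rather than a set of precedence-constrained concurrent nodes, so pinning down exactly where the DAG structure is already encapsulated (namely inside $W^{\operatorname{CI}}$ and $W^{\operatorname{NC}}$) is the crux of the argument.
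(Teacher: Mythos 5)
Your proposal is correct and takes essentially the same route as the paper: the paper's own proof is a two-sentence pointer to Theorem~\ref{m-1interference}, noting only that the non-carry-in workload must now be evaluated via Eq.~\eqref{max_workloadNCDAG} rather than by assuming synchronous activations, and your argument does exactly this while additionally spelling out the self-interference decomposition (Lemma~\ref{lemma:1}, Eq.~\eqref{eq:isol}), the transfer of the window-sliding and $(m-1)$-carry-in counting to the $\da$ setting, and the monotonicity of the fixed-point iteration. The extra detail you supply --- in particular the observation that the sliding argument depends only on all cores being busy with higher priority work and not on the internal precedence structure, which is already absorbed into $W^{\operatorname{CI}}$ and $W^{\operatorname{NC}}$ --- is precisely what the paper leaves implicit.
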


\begin{proof}
The proof of this theorem is similar to that of Theorem~\ref{m-1interference}. The difference here resides in the evaluation of the upper-bound on the workload of tasks without carry-in. Instead of considering a synchronous activation at these tasks at the beginning of the targeted window and assume their subsequent activations to occur as soon as it is legally permitted to do so, the upper-bound has to be computed by using Eq.~\ref{max_workloadNCDAG}).
\end{proof}

\section{conclusions}

In this paper, a sufficient schedulability test for fully preemptive $\da$-based tasks with constrained deadlines is presented. A global fixed task priority ($\gftp$) scheduler and a homogeneous multicore platform are assumed. Under these settings, this work is the first to address this problem to the best of our knowledge. As future work we intend to evaluate the properties of a task model where nodes belonging to each task may execute with different priorities rather than directly inheriting their priority from the task they belong to. 

\bibliographystyle{plain}

\bibliography{defs-abbrev,jmbib}

% that's all folks
\end{document}